\newtheorem{thm}{Theorem}[section]
\newtheorem{prop}[thm]{Proposition}
\theoremstyle{definition}
\newtheorem{defn}[thm]{Definition}
\theoremstyle{remark}
\newtheorem{rem}[thm]{Remark}
\numberwithin{equation}{section}
\newcommand{\set}[1]{\left\{#1\right\}}
\newcommand{\Real}{\mathbb R}
\newcommand{\Natural}{\mathbb N}
\newcommand{\such}{\ | \ }
\newcommand{\prob}{\mathbb{P}}
\newcommand{\Exp}{\mathcal E}
\newcommand{\parti}{\mathbb{T}}
\newcommand{\qprob}{\mathbb{Q}}
\newcommand{\expec}{\mathbb{E}}
\newcommand{\basis}{(\Omega,  \, (\F_t)_{t \in \Real_+}, \, \prob)}
\newcommand{\F}{\mathcal{F}}
\newcommand{\cadlag}{c\`adl\`ag}
\newcommand{\caglad}{c\`agl\`ad}
\newcommand{\ud}{\mathrm d}
\newcommand{\inner}[2]{\left \langle #1 , #2 \right \rangle}
\newcommand{\naone}{NA$1_{\mathsf{s}}$}
\newcommand{\naonee}{\emph{NA}$1_{\mathsf{s}}$}
\newcommand{\X}{\mathcal{X}_{\mathsf{s}}}
\newcommand{\simplex}{\overline{\Delta}^d}
\newcommand{\co}{\mathsf{c}}
\newcommand{\pare}[1]{\left(#1\right)}
\newcommand{\bra}[1]{\left[#1\right]}
\newcommand{\dbra}[1]{[\kern-0.15em[ #1 ]\kern-0.15em]}
\newcommand{\dbraco}[1]{[\kern-0.15em[ #1 [\kern-0.15em[}
\newcommand{\dbraoc}[1]{]\kern-0.15em] #1 ]\kern-0.15em]}
\newcommand{\indic}{\mathbb{I}}
\newcommand{\ucp}{\mathsf{uc} \prob}
\newcommand{\us}{u_{\mathsf{s}}}
\newcommand{\ou}{u}
\newcommand{\oX}{\mathcal{X}}
\newcommand{\ox}{\widehat{X}}
\newcommand{\dfn}{\, := \, }
\begin{document}

\title[Approximation of no-short-sales wealth processes via simple trading]{Multiplicative approximation of wealth processes involving no-short-sales strategies via simple trading}%
\author{Constantinos Kardaras}%
\address{Constantinos Kardaras, Mathematics and Statistics Department, Boston University, 111 Cummington Street, Boston, MA 02215, USA.}%
\email{kardaras@bu.edu}%
\author{Eckhard Platen}%
\address{Eckhard Platen, School of Finance and Economics \& Department of Mathematical Sciences, University of Technology, Sydney, P.O. Box 123, Broadway, NSW 2007, Australia.}%
\email{eckhard.platen@uts.edu.au}%

\thanks{The first author would like to thank the warm hospitality of the School of Finance and Economics of the University of Technology Sydney, where the major part of this work was carried out.}%
\subjclass{60H05 · 60H30 · 91B28}%
\keywords{Semimartingales; buy-and-hold strategies; stochastic integral; arbitrages of the first kind; utility maximization}%

\date{\today}%
%\dedicatory{}%
%\commby{}%
%----------------------------------------------------------------
\begin{abstract}
A financial market model with general semimartingale asset-price processes and where agents can only trade using no-short-sales strategies is considered. We show that wealth processes using continuous trading can be approximated very closely by wealth processes using simple combinations of buy-and-hold trading. This approximation is based on controlling the \emph{proportions} of wealth invested in the assets. As an application, the utility maximization problem is considered and it is shown that optimal expected utilities and wealth processes resulting from continuous trading can be approximated arbitrarily well by the use of simple combinations of buy-and-hold strategies.
\end{abstract}

\maketitle

% ----------------------------------------------------------------
\setcounter{section}{-1}

\section{Introduction}

In frictionless financial market modeling, semimartingale discounted asset-price processes are ubiquitous. On one hand, this structure is enforced by natural market viability conditions --- see for example \cite{MR1304434} and \cite{KarPla08a}. On the other hand, the powerful tool of stochastic integration with respect to general predictable integrands already permits answers to fundamental economic questions, as is for example the classical \emph{utility maximization problem} --- see \cite{MR1722287, MR2023886} for a very general framework.

In financial terms, stochastic integration using \emph{general} predictable integrands translates  into allowing for \emph{continuous} trading in the market. Its theoretical importance notwithstanding, since it allows for existence and elegant representations of optimal wealth processes, continuous trading is but an ideal approximation. In reality, agents in the market can only use simple finite combinations of buy-and-hold strategies. It is therefore natural to question the practical usefulness of such modeling approach. Furthermore, in the context of numerical approximations, where time-discretization is inevitable, computer modeling of hedges can simulate only simple buy-and-hold trading.

\smallskip

The questions we are dealing with in the present paper are the following: \emph{Can wealth processes that are obtained by allowing continuous trading be closely approximated via simple buy-and-hold trading? If the answer to the previous question is affirmative, how can this eventually be achieved?}

Our contribution is an approximation result for wealth processes involving no-short-sales strategies allowing only simple wealth processes. In order to achieve this, we establish an interesting intermediate result on \emph{multiplicative}-type approximation of positive stochastic integrals. This is carried out by following a time-discretized continuous trading strategy in proportional, rather than absolute, terms. The actual number of units held in the portfolio still remains constant between trading dates; however, the investment strategy is parametrized by fractions. Not only is the former choice of discrete-time approximation more reasonable from a trading viewpoint under a range of objectives, it also ensures that the investor's self-financing wealth process stays nonnegative, therefore admissible, even in the presence of jumps in the asset-price process. Note that, in the case where jumps are involved in the market model, a use of the classical \emph{additive} approximation using the dominated convergence theorem for stochastic integrals might fail to guarantee that the approximating wealth processes are nonnegative.

We also provide an application of our approximation result to the expected utility maximization problem. Specifically, under weak economic assumptions, it is shown that the indirect utilities and (near-)optimal wealth processes under the possibility of no-short-sales continuous trading can be approximated arbitrarily well using simple combinations of buy-and-hold strategies.

\smallskip

There is a wealth of literature on approximations of stochastic integrals. In the context of financial applications, we mention for example \cite{MR1442320} dealing with continuous-path assets, as well as \cite{MR1971602}, where a result that is useful in approximating the optimal wealth process for the exponential utility maximization problem is proved. The analysis in the present paper is different, as we are interested in cases where wealth has to remain positive. To the best of the authors' knowledge, no previous work in this respect for asset-processes that include jumps has appeared before.

\bigskip

The structure of the paper is as follows: Section \ref{sec: market and trading} introduces the market model, where no-short-sales trading is allowed. Section \ref{sec: approx cont via bandh} contains the statements and proofs of the basic approximation results. Finally, Section \ref{sec: util max} contains the application to the utility maximization problem.

\section{The Financial Market and No-Short-Sales Trading} \label{sec: market and trading}

\subsection{The financial market model}

The evolution of $d$ risky assets in the market is modeled via \emph{nonnegative} and \emph{\cadlag} (right-continuous with left-hand limits) stochastic processes $S^1, \ldots, S^d$, where we write $S = (S^1_t, \ldots, S^d_t)_{t \in \Real_+}$. We assume in the sequel that all wealth processes, including the above assets, are denominated in units of another traded ``baseline'' asset; this could be, for example, the money market account. All processes are defined on a filtered probability space $\basis$. Here, $\prob$ is a probability on $(\Omega, \F_\infty)$, where $\F_\infty \dfn \bigvee _{t \in \Real_+} \F_t$, and $(\F_t)_{t \in \Real_+}$ is a filtration satisfying the usual assumptions of right-continuity and saturation by all $\prob$-null sets of $\F$. It will be assumed throughout that $\F_0$ is trivial modulo $\prob$.
%\begin{ass} \label{ass: non zero asset prices}
%The vector price process $S = (S^1_t, \ldots, S^d_t)_{t \in \Real_+}$ of risky assets is a semimartingale and each of its components is nonnegative and remains at zero if it reaches zero.
%\end{ass}
%
%The above assumption is very natural on economic grounds. Usually, for each $i = 1, \ldots, d$, $S^i$ denotes the discounted cum-dividend share price process of some company with limited liability, which ensures its nonnegativity; if some company goes bankrupt, then it stops functioning and its future value remains zero. It can be shown that Assumption \ref{ass: non zero asset prices}, in particular the semimartingale property of $S$, is essential for market viability. More information on this will be given in \S \ref{subsec: NUPBR}.
%
%\begin{rem} \label{rem: nonnegativity}
%In mathematically precise terms, the last requirement in Assumption \ref{ass: non zero asset prices} is formalized as follows. For any $i=1, \ldots, d$, define $\zeta^i := \inf \{ t \in \Real_+ \such S^i_{t-} = 0 \text{ or } S^i_t = 0\}$ to be the \textsl{lifetime} of the $i$th asset. We then ask that $S^i_t = 0$ for all $t \in [\zeta^i, \infty)$ on $\{ \zeta^i < \infty \}$.
%\end{rem}
%

\subsection{Trading via simple no-short-sales strategies} \label{subsec: simple no-short-sale}

In the market with the discounted asset-price processes described above, economic agents can trade in order to reallocate their wealth. Realistic trading consists of finite combinations of buy-and-hold strategies. We model this by considering processes of the form $\theta \dfn \sum_{j=1}^n \vartheta_{\tau_{j-1}} \indic_{\dbraoc{\tau_{j-1}, \tau_j}}$, where each $\tau_j$, $j = 0, \ldots, n$, is a finite stopping time with $0 = \tau_0 < \tau_1 < \ldots < \tau_n$, and where each $\vartheta^i_{\tau_{j-1}}$
%:= (\vartheta^i_{\tau_{j-1}})_{i = 1, \ldots, d}$
is $\F_{\tau_{j-1}}$-measurable for $i=1, \ldots, d$ and $j = 1, \ldots, n$. Starting from initial capital $x \in \Real_+$ and investing according to the aforementioned simple strategy $\theta$, the agent's \textsl{discounted}, with respect to the baseline asset, \textsl{wealth process} is
\begin{equation} \label{eq: wealth process simple}
X^{x, \theta} \ = \ x + \int_0^\cdot \inner{\theta_t}{\ud S_t}
%= x + \sum_{j = 1}^n \inner{\vartheta_{\tau_{j-1}}}{ S_{\tau_j \wedge \cdot} - S_{\tau_{j-1}  \wedge \cdot} }
\dfn x + \sum_{j = 1}^n \sum_{i = 1}^d \vartheta^i_{\tau_{j-1}} \big( S^i_{\tau_j  \wedge \cdot} - S^i_{\tau_{j-1}  \wedge \cdot} \big),
\end{equation}
where we are using $\inner{\cdot}{\cdot}$ throughout to denote (sometimes, formally) the usual Euclidean inner product on $\Real^d$. Note that the predictable process $\theta$ is modeling the units of assets held in the portfolio, and that it is \emph{piecewise constant} over time.

The wealth process $X^{x, \theta}$ of \eqref{eq: wealth process simple} could, in principle, become negative. In real markets, economic agents sometimes face institution-based trading constraints, the most important and typical example of which is the prevention of short sales. Consider a wealth process $X^{x, \theta}$ as in \eqref{eq: wealth process simple}. In order to ensure that there are no short sales of the risky assets \emph{and} the baseline asset, we ask that
\begin{equation} \label{eq: no-short-sales}
\theta_t^i \geq 0 \text{ for all } i =1, \ldots, d, \text{ as well as } \sum_{i=1}^d \theta_t^i S_{t-}^i \leq X_{t-}^{x, \theta}, \text{ for all } t \in \Real_+,
\end{equation}
where the subscript ``$t-$'' is used to denote the left-hand limit of processes at time $t \in \Real_+$. For fixed initial wealth $x \in \Real_+$, we define the set $\X (x)$ of all \textsl{no-short-sales} wealth processes using \emph{simple} trading, which are the wealth processes $X^{x, \theta}$ given by \eqref{eq: wealth process simple} such that \eqref{eq: no-short-sales} holds.  (Note that subscripts ``$\mathsf{s}$'', like the one used in the definition of $\X(x)$ for $x \in \Real_+$, will be used throughout the paper serving as a mnemonic for ``simple''.)

\subsection{Arbitrages of the first kind} \label{subsec: NUPBR}

The market viability concept we shall now introduce is a weakened version of the \textsl{No Free Lunch with Vanishing Risk} condition of \cite{MR1304434}.

\begin{defn} \label{dfn: NUPBR}
For $T \in \Real_+$, an $\F_T$-measurable random variable $\xi$ will be called an \textsl{arbitrage of the first kind on $[0, T]$} if $\prob[\xi \geq 0] = 1$, $\prob[\xi > 0] > 0$, and \emph{for all $x > 0$ there exists $X \in \X(x)$, which may depend on $x$, such that $\prob[X_T \geq \xi] = 1$}. If, in a market where only simple, no-short-sales trading is allowed, there are \emph{no} arbitrages of the first kind on \emph{any} interval $[0, T]$, $T \in \Real_+$, we shall say that condition \naone \ holds.
\end{defn}

%It is straightforward to check that condition \naone \ is weaker than condition NFLVR (appropriately stated for simple, no-short-sales trading).

%\begin{defn} \label{dfn: NUPBR}
%\emph{A market where only simple, no-short-sale trading is allowed satisfies the \textsl{No Unbounded Profit with Bounded Risk} (\NUPBR) condition if for all $x \in \Real_+$ and $T \in \Real_+$, the collection $\set{X_T \such X \in \X(x)}$ is bounded in probability, i.e., if
%\begin{equation} \label{eq: NUPBR}
%\downarrow \lim_{\ell \to \infty} \pare{ \sup_{X \in \X(x)} \, \prob [X_T > \ell] } = 0, \quad \text{ for all } x \in \Real_+ \text{ and } T \in \Real_+.
%\end{equation}
%}
%\end{defn}

%Note that if \eqref{eq: NUPBR} is valid for $T \in \Real_+$, it also holds for all finite stopping times $T$.

For economic motivation and more information on condition \naone, we refer the interested reader to \cite{KarPla08a}. The next result follows in a straightforward way from Theorem 2.3 of \cite{KarPla08a}.

\begin{thm} \label{thm: semimarts}
Assume that condition \naonee \ of Definition \ref{dfn: NUPBR} holds. Then, $S$ is a semimartingale. Further, for all $X \in \bigcup_{x \in \Real_+} \X(x)$, defining $\zeta^X := \inf \{ t \in \Real_+ \such X_{t-} = 0 \text{ or } X_t = 0\}$ to be the (first) \textsl{bankruptcy time} of $X$, we have $X_{t} = 0$ for all $t \in [ \zeta^X, \infty [$ on the event $\set{\zeta^X < \infty}$.
\end{thm}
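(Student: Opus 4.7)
The plan is to obtain both conclusions as straightforward consequences of Theorem 2.3 of \cite{KarPla08a}, which, in a setting allowing a potentially broader class of no-short-sales wealth processes, asserts both the semimartingale property of $S$ and the absorption-at-zero behavior, under a condition analogous to \naonee. Since the simple no-short-sales wealth processes in $\X(x)$ form a subset of any such broader class, the absence of arbitrages of the first kind within $\X(x)$ implies at least a boundedness-in-probability statement that is sufficient to invoke the conclusions of the earlier theorem; the work is in describing this reduction carefully.

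For the semimartingale assertion, I would first verify that the set $\set{X_T \such X \in \X(1)}$ is bounded in probability for every $T \in \Real_+$. This is both the classical Bichteler--Dellacherie input for the semimartingale property of each $S^i$ and the working hypothesis under which Theorem 2.3 of \cite{KarPla08a} is established. If this boundedness failed, one could extract a sequence $X^n \in \X(1)$ with $\prob[X^n_T \geq n] \geq \alpha > 0$, and then a Koml\'os--Delbaen--Schachermayer-type convex-combination argument on $X^n / n$ would produce an $\F_T$-measurable $\xi$ with $\prob[\xi \geq 0]=1$ and $\prob[\xi > 0] > 0$ that is super-replicated from arbitrarily small initial capital using convex combinations of simple strategies, contradicting \naonee.

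For the absorption-at-zero statement, I would argue by contradiction: if there existed $X \in \bigcup_x \X(x)$ with $\prob[\zeta^X < \infty,\, X_s > 0] > 0$ for some deterministic $s > 0$, the plan is to paste and scale the simple strategy $\theta$ underlying $X$ to construct, from arbitrarily small initial capital $\eps > 0$, a new simple no-short-sales wealth process that super-replicates a fixed nonnegative $\xi$ with $\prob[\xi > 0] > 0$; this would again contradict \naonee. Concretely, one chooses a stopping time $\sigma$ from the partition of $\theta$ lying just before $\zeta^X$ on the bad event, and rescales the tail of $\theta$ on $\dbraoc{\sigma, s}$ by the factor $\eps / X_\sigma$ (while setting the strategy to zero off the bad event).

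The main technical obstacle lies in the jump case $X_{\zeta^X-} > 0 = X_{\zeta^X}$: here the no-short-sales constraint is in force at every $t$, not only at the trading dates, and therefore forces $\theta^i_t S^i_{t-} = 0$ whenever $X_{t-} = 0$, so any ``resurrection'' of $X$ after $\zeta^X$ can only arise from assets $S^i$ that vanish at $t-$ and jump upward at $t$. The pasting-and-scaling construction still yields the desired arbitrage, but care must be taken in placing $\sigma$ immediately before such an upward jump on a subset of positive probability; this is handled by an exhaustion over the finitely many trading dates underlying $\theta$ and an application of the optional stopping lemma.
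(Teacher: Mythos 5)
Your proposal matches the paper: the paper's entire argument for Theorem \ref{thm: semimarts} is the single remark that the result ``follows in a straightforward way from Theorem 2.3 of \cite{KarPla08a}'', which is precisely the reduction you make in your opening paragraph. The additional sketches you give of the boundedness-in-probability argument and of the absorption-at-zero construction go beyond anything the paper itself supplies (they reconstruct the internals of the cited theorem rather than the paper's proof), and they are consistent with the standard route to that result, so there is nothing in the paper to compare them against.
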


\subsection{No-short-sales continuous trading}

If condition \naone \ is in force, Theorem \ref{thm: semimarts} implies the semimartingale property of $S$. We can therefore use general stochastic integration with respect to $S$, allowing in effect agents to change their position in the assets in a continuous fashion. This form of trading is only of theoretical interest, since it cannot be implemented in reality even if one ignores market frictions, as we do here.

Starting from initial capital $x \in \Real_+$ and investing according to some predictable and $S$-integrable strategy $\theta = (\theta^1_t, \ldots, \theta^d_t)_{t \in \Real_+}$, an agent's \textsl{discounted wealth process} is
\begin{equation} \label{eq: wealth process general}
X^{x, \theta} \dfn x + \int_0^\cdot \inner{\theta_t}{\ud S_t},
\end{equation}
where in the above definition $\int_0^\cdot \inner{\theta_t}{\ud S_t}$ denotes a \emph{vector} It\^o stochastic integral --- see \cite{MR1975582}.

For an initial wealth $x \in \Real_+$, $\oX (x)$ will denote the set of all \textsl{no-short-sales} wealth processes allowing continuous trading, that is, wealth processes $X^{x, \theta}$ given by \eqref{eq: wealth process general} such that \eqref{eq: no-short-sales} holds. As this form of continuous-time trading obviously includes as a special case the simple no-short-sales trading described in \S \ref{subsec: simple no-short-sale}, we have $\X(x) \subseteq \oX(x)$ for all $x \in \Real_+$.

Under condition \naone, the conclusion of Theorem \ref{thm: semimarts} stating that $X_{t} = 0$ for all $t \in [ \zeta^X, \infty [$ on the event $\set{\zeta^X < \infty}$, where $\zeta^X := \inf \{ t \in \Real_+ \such X_{t-} = 0 \text{ or } X_t = 0\}$, extends to all $X \in \bigcup_{x \in \Real_+} \oX(x)$. Again, this comes as a straightforward consequence of Theorem 2.3 in \cite{KarPla08a}. We shall feel free to imply this strengthened version of Theorem \ref{thm: semimarts} whenever we are referring to it.

\section{Approximation of No-Short-Sales Wealth Processes via Simple Trading} \label{sec: approx cont via bandh}

In this section, we discuss an approximation result for no-short-sales wealth processes obtained from continuous trading via simple strategies. We consider convergence of processes in probability \emph{uniformly} on compact time-sets. The notation $\ucp$-$\lim_{n \to \infty} \xi^n = \xi$ shall mean that $\prob$-$\lim_{n \to \infty} \sup_{t \in [0, T]} | \xi_t^n - \xi_t| = 0$, for all $T \in \Real_+$. Note that $\ucp$-convergence comes from a metric topology. For more information on this rather strong type of convergence, we refer to \cite{MR2273672}.

\subsection{The approximation result}
We now state the main result of this section.

\begin{thm} \label{thm: approx_with_bandh}
Assume that condition \naonee \ is valid in the market. For all $x \in \Real_+$ and $X \in \oX(x)$, there exists an $\X(x)$-valued sequence $(X^k)_{k \in \Natural}$ such that $\ucp$-$\lim_{k \to \infty} X^k = X$.
\end{thm}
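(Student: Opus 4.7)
The case $x = 0$ is immediate: the strengthened Theorem \ref{thm: semimarts} gives $X \equiv 0$, and we may take $X^k \equiv 0 \in \X(0)$. Assume henceforth $x > 0$, and write $X = X^{x, \theta} \in \oX(x)$. The central idea, flagged in the introduction, is to reparametrize $\theta$ in terms of \emph{proportions of wealth}, obtaining a multiplicative representation of $X$. Set
$$
\pi^i_t \dfn \frac{\theta^i_t S^i_{t-}}{X_{t-}} \indic_{\set{X_{t-} > 0}}, \qquad i = 1, \ldots, d.
$$
By \eqref{eq: no-short-sales}, $\pi$ is predictable and takes values in the simplex $\Delta \dfn \set{p \in \Real_+^d \such \sum_i p^i \leq 1}$. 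Under \naone, each asset can be written as $S^i = S^i_0 \, \mathcal{E}(R^i)$ where $R^i \dfn \int_0^\cdot (S^i_{s-})^{-1} \ud S^i_s$ is stopped at the bankruptcy time of $S^i$ (assuming WLOG $S^i_0 > 0$, otherwise $S^i \equiv 0$). A direct computation based on \eqref{eq: wealth process general}, combined with the strengthened Theorem \ref{thm: semimarts} which kills $X$ after $\zeta^X$, gives the multiplicative representation
$$
X \ = \ x \, \mathcal{E}(M), \qquad \text{where } M \dfn \sum_{i=1}^d \pi^i \cdot R^i.
$$

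Second, I would discretize the \emph{proportions}, not the units. Fix an exhausting sequence of deterministic partitions $0 = t_0^k < t_1^k < \ldots$ of $\Real_+$ with mesh tending to zero, and set $\pi^{(k)}_t \dfn \sum_{j \geq 1} \pi_{t_{j-1}^k} \indic_{(t_{j-1}^k, t_j^k]}(t)$. Each $\pi^{(k)}$ is simple, predictable, and $\Delta$-valued. The corresponding simple strategy holds $\vartheta^{(k), i}_{t_{j-1}^k} \dfn \pi^i_{t_{j-1}^k} X^k_{t_{j-1}^k} / S^i_{t_{j-1}^k}$ units of asset $i$ (with the convention $0/0 \dfn 0$) over each interval $(t_{j-1}^k, t_j^k]$. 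The key telescoping observation is that, \emph{because units are held constant}, the budget constraint $\sum_i \vartheta^{(k),i}_{t_{j-1}^k} S^i_{t-} \leq X^k_{t-}$ at any $t \in (t_{j-1}^k, t_j^k]$ reduces algebraically to the \emph{single} inequality $\sum_i \vartheta^{(k),i}_{t_{j-1}^k} S^i_{t_{j-1}^k} \leq X^k_{t_{j-1}^k}$ evaluated at the rebalancing time; this latter inequality holds since $\sum_i \pi^i \leq 1$. Therefore $X^k \in \X(x)$, and the same multiplicative calculation as above yields $X^k = x \, \mathcal{E}(M^k)$ with $M^k \dfn \sum_i \pi^{(k), i} \cdot R^i$.

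Finally, I would pass to the limit. Since the $\pi^{(k), i}$ are uniformly bounded by $1$ and converge to $\pi^i$ in an appropriate predictable-pathwise sense along a refinement of the partitions that captures the jumps of $\pi$, the stochastic dominated convergence theorem for vector integrals gives $\ucp$-$\lim_k M^k = M$. The step that I expect to be the main obstacle---and which encapsulates the ``multiplicative approximation of positive stochastic integrals'' advertised in the introduction---is upgrading this to $\ucp$-$\lim_k \mathcal{E}(M^k) = \mathcal{E}(M)$. This is not automatic, but one can exploit the fact that $\Delta M^k \geq -1$ and $\Delta M \geq -1$ uniformly (so the exponentials stay nonnegative), together with the uniform bound $\pi^{(k)}, \pi \in \Delta$, to localize via stopping times reducing $[R^i, R^i]$ and the jumps of $R^i$ to bounded quantities, after which Emery's stability of the stochastic exponential (equivalently, a Gronwall argument applied to the ratio $X^k/X = \mathcal{E}(\widetilde{N}^k)$ for an explicit semimartingale $\widetilde{N}^k \to 0$) delivers the desired $\ucp$-convergence $X^k \to X$.
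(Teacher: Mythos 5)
Your overall strategy --- reparametrize $X$ in terms of the fractions $\pi^i = \theta^i S^i_-/X_-$ so that $X = x\,\Exp(\pi\cdot R)$, discretize the \emph{proportions}, and pass to the limit multiplicatively --- is exactly the paper's, and your telescoping verification that the no-short-sales constraint need only be checked at the rebalancing dates is correct. However, there is a genuine gap at the identity $X^k = x\,\Exp(M^k)$ with $M^k = \sum_i \pi^{(k),i}\cdot R^i$. If units are held \emph{constant} on $(t^k_{j-1}, t^k_j]$, the fraction of wealth in asset $i$ drifts away from $\pi^i_{t^k_{j-1}}$ as prices move, so the resulting wealth process is \emph{not} the stochastic exponential of the sampled-proportion integral; it is the product
\[
x \prod_{j} \Big\{ 1 + \sum_{i} \pi^i_{t^k_{j-1}} \big( S^i_{t^k_j \wedge \cdot} - S^i_{t^k_{j-1}\wedge \cdot}\big)/S^i_{t^k_{j-1}\wedge\cdot} \Big\},
\]
as in \eqref{eq: Xmult}. (One asset, one period, $\pi \equiv 1/2$: constant units give $\tfrac12 + \tfrac12 S_t/S_0$, whereas $\Exp(\tfrac12 R)_t$ is a genuinely different, exponential-type process.) Maintaining piecewise-constant \emph{proportions} requires continuous rebalancing, so $\Exp(M^k)$ lies in $\oX(x)$, not $\X(x)$. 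Consequently your final step, upgrading $M^k\to M$ to $\Exp(M^k)\to\Exp(M)$, only proves convergence of continuous-trading wealth processes with simple fraction processes; the comparison between the buy-and-hold \emph{product} and the stochastic exponential --- which is the real content of the paper's Theorem \ref{thm: multipl_approx}, proved by taking logarithms and matching $\sum_j\log(1+\eta_{\tau^k_{j-1}+}\Delta^k_j S)$ against $\eta\cdot S - \tfrac12[\eta\cdot S^\co,\eta\cdot S^\co] - \sum_{t\le\cdot}(\eta_t\Delta S_t - \log(1+\eta_t\Delta S_t))$ --- is entirely absent from your argument.

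A second, related gap: for a general \emph{predictable} $\simplex$-valued $\pi$ (which is all the representation $X = x\,\Exp(\pi\cdot R)$ delivers), the sampled process $\pi^{(k)}_t = \sum_j \pi_{t^k_{j-1}}\indic_{(t^k_{j-1},t^k_j]}(t)$ need not converge to $\pi$ in any sense usable by the dominated convergence theorem --- pointwise sampling at countably many deterministic times does not determine a predictable process, and ``a refinement capturing the jumps of $\pi$'' is not available when $\pi$ is not \caglad. The paper therefore splits the argument in two: a monotone-class approximation producing simple predictable $\pi^k$ (not samples of $\pi$) with control of both $\pi^k\cdot R - \pi\cdot R$ and $[(\pi^k-\pi)\cdot R,(\pi^k-\pi)\cdot R]$, followed by Theorem \ref{thm: multipl_approx} applied to each simple (hence \caglad) $\pi^k$. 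Finally, before taking logarithms one must rule out $\Exp(M)$ hitting zero (possible when $\sum_i\pi^i=1$ and $\inner{\pi}{\Delta R}=-1$); the paper does this by first replacing $\pi$ with $(1-\epsilon)\pi$, respectively replacing $X$ by $\epsilon + (1-\epsilon/x)X$, a reduction your localization remark does not quite supply.
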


The proof of Theorem \ref{thm: approx_with_bandh}, which will be given in \S \ref{subsec: proof of thm: approx with bandh}, will involve a ``multiplicative'' approximation of the stochastic integral, discussed in \S \ref{subsec: propor trading} and \S \ref{subsec: approx in mult way}, which is sensible from a trading viewpoint when dealing with nonnegative wealth processes.

\begin{rem} \label{rem: approx away from zero}
In the statement of Theorem \ref{thm: approx_with_bandh}, suppose further that there  exists some $\epsilon > 0$ such that $X \geq \epsilon$. Then, it is straightforward to see that the approximating sequence $(X^k)_{k \in \Natural}$ can be chosen in a way such that $X^k \geq \epsilon$, for all $k \in \Natural$.
%Indeed, if $\oX(x) \ni X \geq \epsilon$, then $(X - \epsilon) \in \oX(x - \epsilon)$. By Theorem \ref{thm: approx_with_bandh}, we can find some $\X(x - \epsilon)$-valued sequence $(\xi^k)_{k \in \Natural}$ such that $\ucp$-$\lim_{k \to \infty} \xi^k = X - \epsilon$. Then, $X^k := \epsilon + \xi^k$ satisfies $X^k \in \X(x)$ and $X^k \geq \epsilon$ for all $k \in \Natural$, as well as $\ucp$-$\lim_{k \to \infty} X^k = X$.
\end{rem}

\subsection{Proportional trading} \label{subsec: propor trading}

Sometimes it is more useful to characterize investment in \emph{relative}, rather than \emph{absolute} terms. This means looking at the \emph{fraction} of current wealth invested in some asset rather than the \emph{number of units} of the asset held in the portfolio, as we did in \eqref{eq: wealth process simple} and \eqref{eq: wealth process general}.

Under condition \naone, the validity of Theorem \ref{thm: semimarts} allows one to consider the \textsl{total returns} process $R = (R_t^1, \ldots, R_t^d)_{t \in \Real_+}$, where $R$ satisfies $R_0 = 0$ and the system of stochastic differential equations $\ud S^i_t = S^i_{t-} \ud R^i_t$ for $i=1, \ldots, d$ and $t \in \Real_+$. In other words, $S^i = S^i_0 \Exp(R^i)$, where $\Exp$ is the \textsl{stochastic exponential} operator, see \cite{MR2273672}. It should be noted that, for $i=1, \ldots, d$, the process $R^i$ only lives in the stochastic interval $\dbraco{0 , \zeta^{S^i}}$ until the bankruptcy time $\zeta^{S^i}$ of Theorem \ref{thm: semimarts}, and that it might explode at time $\zeta^{S^i}$. However, this does not affect the validity of the conclusions below, due to the fact that, by Theorem \ref{thm: semimarts}, $S^i_{t} = 0$ holds for all $t \in [ \zeta^{S^i}, \infty [$ on $\big\{ \zeta^{S^i} < \infty \big\}$, $i=1, \ldots, d$.

Let $\simplex$ denote the closed \textsl{$d$-dimensional simplex}, i.e.,
\[
\simplex \dfn \set{ \pare{z^1, \ldots, z^d} \in \Real^d \ \Big| \ z^i \geq 0 \text{ for all } i=1, \ldots, d, \text{ and } \sum_{i=1}^d z^i \leq 1}.
\]
For any predictable, $\simplex$-valued process $\pi = (\pi^1_t, \ldots, \pi^d_t)_{t \in \Real_+}$ of investment fractions, consider the process $X^{(x, \pi)}$ defined via
\begin{equation} \label{eq: wealth_mult}
X^{(x, \pi)} \dfn x \, \Exp \left( \int_0^\cdot \inner{\pi_t}{\ud R_t} \right).
\end{equation}
Observe that we are using parentheses in the ``$(x, \pi)$'' superscript of $X$ in \eqref{eq: wealth_mult} to distinguish from a wealth process of the form $X^{x, \theta} = x + \int_0^\cdot \inner{\theta_t}{\ud S_t}$, generated by $\theta$ in an additive way.

Under condition \naone, the set of all processes $X^{(x, \pi)}$ when ranging $\pi$ over all the predictable $\simplex$-valued processes is exactly equal to $\oX(x)$. This is straightforward as soon as one notices that
\[
\set{ \zeta^{X^{(x, \pi)}} < \infty } = \bigcup_{i=1}^d \set{\zeta^{S^i} < \infty, \ \sum_{j =1}^d \pi^j_{\zeta^{S^j}} \indic_{\set{\zeta^{S^j} = \zeta^{S^i}}} = 1}.
\]
%where when $\pi^i_{\zeta^{S^i}} = 1$ all wealth is at time $\zeta^{S^i}$ invested in the $i$th asset.

\subsection{Stochastic integral approximation in a multiplicative way} \label{subsec: approx in mult way}

Start with some adapted and \caglad \ (left continuous with right limits), therefore predictable, $\simplex$-valued process $\pi$ of investment fractions. The wealth process generated by $\pi$ in a multiplicative way starting from $x \in \Real_+$ is $X^{(x, \pi)}$, as defined in \eqref{eq: wealth_mult}. Consider now some economic agent who may only change the asset positions at times contained in $\parti = \{0 =: \tau_0 < \tau_1 < \ldots < \tau_n \}$. Wanting to approximately, but rather closely, replicate $X^{(x, \pi)}$, the agent will decide at each trading instant $\tau_{j-1}$ to rearrange the portfolio wealth in such a way as to follow with a piecewise constant number of units of the asset held until the next trading time $\tau_j$ the given investment portfolio. More precisely, the agent will rearrange wealth at time $\tau_{j-1}$, $j =1, \ldots, n$, in a way such that a proportion $\pi^i_{\tau_{j-1} +} := \lim_{t \downarrow \tau_{j-1}} \pi^i_t$ is held in the $i$th asset, $i=1, \ldots, d$; the resulting number of units is then held constant until time $\tau_j$, when a new reallocation will be made in the way previously described. Starting from initial capital $x \in \Real_+$ and following the above-described strategy, the agent's wealth remains nonnegative and is given by
\begin{equation} \label{eq: Xmult}
X^{(x, \pi; \parti)} \ := \ x \prod_{j=1}^n \left\{  1 + \sum_{i = 1}^d \pi^i_{\tau_{j-1} +} \bigg( \frac{S^i_{\tau_j\wedge \cdot} - S^i_{\tau_{j-1}\wedge \cdot}}{S^i_{\tau_{j-1}\wedge \cdot}} \bigg) \right\}.
\end{equation}
Note that, for all $i=1, \ldots, d$, $j=1, \ldots, n$ and $t \in \Real_+$, the ratio $(S^i_{\tau_j\wedge t} - S^i_{\tau_{j-1}\wedge t})/S^i_{\tau_{j-1}\wedge t}$ is assumed to be zero on the event $\{S^i_{\tau_{j-1}\wedge t} = 0\}$. Using the fact that the filtration $(\F_t)_{t \in \Real_+}$ is right-continuous, it is straightforward to see that $X^{(x, \pi; \parti)} \in \X (x)$.

Consider a sequence $(\parti^k)_{k \in \Natural}$ with $\parti^k \equiv \{ \tau_0^k < \ldots < \tau^k_{n^k} \}$ for each $k \in \Natural$, where each $\tau_j^k$, for $k \in \Natural$ and $j=0, \ldots, n^k$, is a finite stopping time. We say that $(\parti^k)_{k \in \Natural}$ \textsl{converges to the identity} if, $\prob$-a.s., $\lim_{k \to \infty} \tau^k_{n^k}= \infty$ as well as $\lim_{k \to \infty} \sup_{j=1, \ldots, n^k} |\tau^k_j - \tau^k_{j-1}| = 0$.
%For a We shall show that if $\parti$ has sufficiently thin mesh, it approximates

\begin{thm} \label{thm: multipl_approx}
Assume the validity of condition \naonee. Consider any adapted and \caglad \ $\simplex$-valued process $\pi$. If $(\parti^k)_{k \in \Natural}$ converges to the identity, then $\ucp$-$\lim_{k \to \infty} X^{(x, \pi; \parti^k)} = X^{(x, \pi)}$.
\end{thm}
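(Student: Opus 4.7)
The plan is to recast the simple-strategy wealth process $X^{(x, \pi; \parti^k)}$ as a continuous-trading wealth process driven by a specific explicit ``effective proportion'' process $\rho^k$ taking values in $\simplex$, to show that $\rho^k$ converges pointwise to $\pi$ in probability, and to conclude via a dominated-convergence argument for stochastic integrals combined with the stability of the stochastic exponential.

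\emph{Step 1 (the effective proportion).} For $t \in \dbraoc{\tau^k_{j-1}, \tau^k_j}$, the product formula \eqref{eq: Xmult} gives
\begin{equation*}
X^{(x, \pi; \parti^k)}_t \ = \ X^{(x, \pi; \parti^k)}_{\tau^k_{j-1}} \pare{1 + \sum_{i=1}^d \pi^i_{\tau^k_{j-1}+} \, \frac{S^i_t - S^i_{\tau^k_{j-1}}}{S^i_{\tau^k_{j-1}}}}.
\end{equation*}
Taking differentials and using $\ud S^i_t = S^i_{t-} \ud R^i_t$, this can be rewritten as $\ud X^{(x, \pi; \parti^k)}_t = X^{(x, \pi; \parti^k)}_{t-} \inner{\rho^k_t}{\ud R_t}$, where
\begin{equation*}
\rho^{k, i}_t \ := \ \pi^i_{\tau^k_{j-1}+} \cdot \frac{S^i_{t-}/S^i_{\tau^k_{j-1}}}{1 + \sum_{l=1}^d \pi^l_{\tau^k_{j-1}+} (S^l_{t-} - S^l_{\tau^k_{j-1}})/S^l_{\tau^k_{j-1}}}
\end{equation*}
with the convention $0/0 = 0$, which is compatible with the bankruptcy-absorption property of Theorem \ref{thm: semimarts}. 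One checks that $\rho^k$ is predictable and $\simplex$-valued, whence $X^{(x, \pi; \parti^k)} = x \, \Exp \pare{\int_0^\cdot \inner{\rho^k_s}{\ud R_s}}$.

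\emph{Step 2 (convergence $\rho^k \to \pi$).} Fix $t > 0$ and let $j(k)$ be the random index such that $t \in (\tau^k_{j(k)-1}, \tau^k_{j(k)}]$. Since $\parti^k$ converges to the identity, $\tau^k_{j(k)-1} \uparrow t$ from strictly below, almost surely. Left-continuity of $\pi$ at $t$ yields $\pi^i_{\tau^k_{j(k)-1}+} \to \pi^i_t$ almost surely: for any $\eps > 0$, $|\pi^i_s - \pi^i_t| < \eps$ for $s \in (\tau^k_{j(k)-1}, t)$ close enough to $t$, and the right-limit $\pi^i_{\tau^k_{j(k)-1}+}$ inherits this bound as $\tau^k_{j(k)-1}$ approaches $t$. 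The \cadlag{} property of $S$ gives $S^i_{\tau^k_{j(k)-1}} \to S^i_{t-}$ almost surely. On $\set{S^i_{t-} > 0}$, both the numerator and denominator of $\rho^{k,i}_t$ then converge to $\pi^i_t$ and $1$ respectively, yielding $\rho^{k,i}_t \to \pi^i_t$; on the complementary event, the convention together with Theorem \ref{thm: semimarts} forces both sides to vanish. Hence $\rho^k_t \to \pi_t$ in probability for every $t \in \Real_+$.

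\emph{Step 3 (passage to the limit; main obstacle).} As $\rho^k$ is uniformly bounded in $\simplex$ and converges to $\pi$ pointwise in probability at each $t$, the dominated convergence theorem for vector stochastic integrals --- applied after a standard localization that controls $R$ near the bankruptcy times of $S$ --- gives $\int_0^\cdot \inner{\rho^k_s}{\ud R_s} \to \int_0^\cdot \inner{\pi_s}{\ud R_s}$ in $\ucp$, and in fact in the Emery semimartingale topology. Since the jumps of both integrals are bounded below by $-1$ (so that the stochastic exponentials remain nonnegative), continuity of the stochastic exponential under Emery convergence implies $\ucp$-$\lim_{k \to \infty} X^{(x, \pi; \parti^k)} = X^{(x, \pi)}$, as desired. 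The main obstacle is executing these last two steps (dominated convergence and exponential stability) in a topology strong enough for pointwise-in-probability convergence of uniformly bounded $\simplex$-valued predictable integrands to transfer to $\ucp$ convergence of the resulting wealth processes; this requires a careful localization exploiting Theorem \ref{thm: semimarts}, because $R$ may explode at the bankruptcy times of $S$.
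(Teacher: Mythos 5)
Your Step 1 is algebraically correct and is a genuinely different starting point from the paper's: you recast the discretized wealth as $x\,\Exp\big(\int_0^\cdot \inner{\rho^k_s}{\ud R_s}\big)$ for an explicit $\simplex$-valued effective proportion $\rho^k$, whereas the paper never does this --- it compares the two wealth processes through their logarithms, expanding $\log X^{(x,\pi;\parti^k)}$ as a sum of logarithms of the buy-and-hold factors and $\log X^{(x,\pi)}$ via the stochastic-exponential formula, and then matches the pieces term by term. However, your argument has a genuine gap, and it sits exactly where you place ``the main obstacle'': Step 3 is not a routine application of cited theorems, because the integrator you use is $R$, and $R^i$ is not a semimartingale on all of $\Real_+$ --- it lives only on $\dbraco{0, \zeta^{S^i}}$ and may explode there. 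The dominated convergence theorem and the Emery-stability of the stochastic exponential are theorems about honest semimartingale integrators; a ``standard localization'' strictly before the $\zeta^{S^i}$ does not deliver the conclusion on all of $[0,T]$, and there is no obvious way to paste the localized statements together across the bankruptcy times. The paper's proof is engineered precisely to avoid this: it substitutes $\eta := (\pi/S_-)\indic_{\set{S_->0}}$ so that every integral is taken against the genuine semimartingale $S$ rather than against $R$, and it first replaces $\pi$ by $(1-\epsilon)\pi$ so that all wealth processes are bounded away from zero on compacts, which is what legitimizes the logarithmic comparison and the Taylor estimate for the $\sum\big(x-\log(1+x)\big)$ terms.

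Two further points need repair even granting a fix for the integrator. First, the dominated convergence theorem requires almost-sure pointwise convergence of the integrands off a set not charged by the integrator, together with domination by an integrable process; you establish only convergence in probability of $\rho^k_t$ to $\pi_t$ at each fixed deterministic $t$, and if you rewrite the integrals against $S$ the integrands become $\rho^{k}/S_-$, for which a dominating $S$-integrable process is not immediate (the paper's discretized integrands $\eta_{\tau^k_{j-1}+}$ are dominated by the running supremum of the \caglad{} process $\eta$, which is why its appeal to dominated convergence goes through). Second, your Step 2 claim that the denominator of $\rho^{k,i}_t$ tends to $1$ is false at a time $t$ at which some asset $S^l$ satisfies $S^l_{t-}=0$ while $S^l_s>0$ for $s<t$: there each corresponding ratio equals $-1$ identically, so the denominator tends to $1-\sum_l \pi^l_t$ over the assets dying continuously at $t$. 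This is a thin set of times per path, but it is exactly the degenerate regime your conventions are meant to handle, and it must be shown not to contribute to the limit. In short, the skeleton (effective proportions, dominated convergence in the Emery topology, continuity of $\Exp$) is an attractive alternative route that would even dispense with the paper's $(1-\epsilon)$ reduction, but the hardest step is announced rather than carried out, and carrying it out essentially forces you to reproduce the paper's $\eta\cdot S$ device.
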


\begin{proof}

Under condition \naone, and in view of Theorem \ref{thm: semimarts}, we have $\ucp$-$\lim_{\epsilon \downarrow 0} X^{(x, (1 - \epsilon) \pi)} = X^{(x, \pi)}$, as well as that, for all $k \in \Natural$, $\ucp$-$\lim_{\epsilon \downarrow 0} X^{(x, (1 - \epsilon) \pi;  \parti^k)} = X^{(x, \pi; \parti^k)}$. It follows that we might assume that $\pi$ is actually $(1 - \epsilon) \simplex$-valued, where $0 < \epsilon < 1$, which means that $X^{(x, \pi)}$, as well as $X^{(x, \pi; \parti^k)}$ for all $k \in \Natural$, remain \emph{strictly} positive. Actually, since the jumps in the returns of the wealth processes involved are bounded below by $-(1 - \epsilon)$, the wealth processes themselves are bounded away from zero in compact time-intervals, with the strictly positive bound possibly depending on the path. It then follows that $\ucp$-$\lim_{k \to \infty} X^{(x, \pi; \parti^k)} = X^{(x, \pi)}$ is equivalent to $\ucp$-$\lim_{k \to \infty} \log X^{(x, \pi; \parti^k)} = \log X^{(x, \pi)}$, which is what we shall prove below.

To ease notation in the course of the proof we shall assume that $d=1$. This is done for typographical convenience only; one can read the whole proof for the case of $d$ assets, if multiplication and division of $d$-dimensional vectors are understood in a coordinate-wise sense. Also, in order to avoid cumbersome notation, from here onwards the dot ``$\cdot$'' between two processes will denote stochastic integration and $[Y,Y]$ will denote the quadratic variation process of a semimartingale $Y$.

Proceeding with the proof, write
\begin{eqnarray} \label{eq: log-relative mult 1}
% \nonumber to remove numbering (before each equation)
  \log \pare{\frac{X^{(x, \pi; \parti^k)}}{X^{(x, \pi)}}} &=& \sum_{j=1}^{n^k} \log \left(  1 + \pi_{\tau^k_{j-1} +} \frac{S_{\tau^k_j\wedge \cdot} - S_{\tau^k_{j-1}\wedge \cdot}}{S_{\tau^k_{j-1}\wedge \cdot}}  \right) \\
\nonumber    &-& \left( \pi \cdot R - \frac{1}{2} [\pi \cdot R^\co, \pi \cdot R^\co] -  \sum_{t \leq \cdot} \left( \pi_t \Delta R_t - \log \left( 1 + \pi_t \Delta R_t \right) \right) \right),
\end{eqnarray}
where $R^\co$ is the uniquely-defined continuous local martingale part of the semimartingale $R$. Define the adapted \caglad \  process $\eta := (\pi / S_-) \indic_{\{S_- > 0\}}$. For $k \in \Natural$ and $j=1, \ldots, n^k$, define $\Delta^k_j S := S_{\tau^k_j\wedge \cdot} - S_{\tau^k_{j-1}\wedge \cdot}$. Further, $S^\co$ is the continuous local martingale part of the semimartingale $S$. Since $S - S_0 = (S_- \indic_{\{S_- > 0 \}}) \cdot R$, we can write \eqref{eq: log-relative mult 1} as
%\begin{equation} \label{eq: log-relative mult 2}
%\log \pare{\frac{X^{(x, \pi; \parti^k)}}{X^{(x, \pi)}}} = \sum_{j=1}^{n^k} \log \left(  1 + \eta_{\tau^k_{j-1} +} \Delta^k_j S \right) - \eta \cdot S + \frac{1}{2} [\eta \cdot S^\co, \eta \cdot S^\co] +  \sum_{t \leq \cdot} \left( \eta_t \Delta S_t - \log \left( 1 + \eta_t \Delta S_t \right) \right).
%\end{equation}
\begin{eqnarray} \label{eq: log-relative mult 2}
% \nonumber to remove numbering (before each equation)
  \log \pare{\frac{X^{(x, \pi; \parti^k)}}{X^{(x, \pi)}}} &=& \sum_{j=1}^{n^k} \log \left(  1 + \eta_{\tau^k_{j-1} +} \Delta^k_j S \right) \\
\nonumber    &-& \left( \eta \cdot S - \frac{1}{2} [\eta \cdot S^\co, \eta \cdot S^\co] -  \sum_{t \leq \cdot} \left( \eta_t \Delta S_t - \log \left( 1 + \eta_t \Delta S_t \right) \right) \right).
\end{eqnarray}
As $(\parti^k)_{k \in \Natural}$ converges to the identity and $\eta$ is \caglad, the dominated convergence theorem for stochastic integrals %(Theorem ?? in \cite{MR1037262})
gives $\ucp \text{-} \lim_{k \to \infty} \sum_{j=1}^{n^k} \eta_{\tau^k_{j-1} +} \Delta^k_j S = \eta \cdot S$.
%Also, the usual Lebesgue dominated convergence theorem gives $\ucp$-$\lim_{k \to \infty} \sum_{j=1}^{n^k} \big( \eta_{\tau^k_{j-1} +} \Delta^k_j S \big)^2 = [\eta \cdot S^\co, \eta \cdot S^\co]$.
Furthermore, using the fact that the function $\Real \ni x \mapsto x - \log(1 + x)$ behaves like $\Real \ni x \mapsto x^2 / 2$ near $x = 0$, one obtains
\[
\ucp \text{-} \lim_{k \to \infty} \sum_{j=1}^{n^k} \left( \eta_{\tau^k_{j-1} +} \Delta^k_j S - \log \left( 1 + \eta_{\tau^k_{j-1} +} \Delta^k_j S \right) \right) = \sum_{t \leq \cdot} \left( \eta_t \Delta S_t - \log \left( 1 + \eta_t \Delta S_t \right) \right) + \frac{1}{2} [\eta \cdot S^\co, \eta \cdot S^\co].
\]
%\begin{eqnarray*} %\label{eq: log-relative mult help}
%% \nonumber to remove numbering (before each equation)
%  \ucp \text{-} \lim_{k \to \infty} \sum_{j=1}^{n^k} \left( \eta_{\tau^k_{j-1} +} \Delta^k_j S - \log \left( 1 + \eta_{\tau^k_{j-1} +} \Delta^k_j S \right) \right) &=& \sum_{t \leq \cdot} \left( \eta_t \Delta S_t - \log \left( 1 + \eta_t \Delta S_t \right) \right) \\
%    &+& \frac{1}{2} [\eta \cdot S^\co, \eta \cdot S^\co].
%\end{eqnarray*}
via standard stochastic-analysis manipulation. The last facts, coupled with \eqref{eq: log-relative mult 2}, readily imply that $\ucp$-$\lim_{k \to \infty} \log X^{(x, \pi; \parti^k)} = \log X^{(x, \pi)}$, which completes the proof.
\end{proof}

\subsection{Proof of Theorem \ref{thm: approx_with_bandh}} \label{subsec: proof of thm: approx with bandh}

Consider $X \equiv X^{(x, \pi)} \in \oX(x)$ for some $\simplex$-valued predictable process $\pi$. In order to prove Theorem \ref{thm: approx_with_bandh}, we can safely assume that $X \geq \epsilon$ for some $\epsilon > 0$, since if $X \in \oX(x)$, then $\epsilon + (1 - \epsilon/x) X \in \oX(x)$ as well. This assumption is in force throughout the proof.

Recall that a \textsl{simple} predictable process is of the form $\sum_{j=1}^n h_{j-1} \indic_{\dbraoc{t_{j-1}, t_j}}$, where $h_{j-1} \in \F_{t_{j-1}}$ for $j=1, \ldots, d$ and $0 = t_0 < t_1 < \ldots < t_n$, where $t_j \in \Real_+$ for $j=0, \ldots, n$. We shall show below that there exists a sequence of \emph{simple} $\simplex$-valued predictable processes $(\pi^k)_{k \in \Natural}$ such that $\ucp$-$\lim_{k \to \infty} X^{(x, \pi^k)} =  X^{(x, \pi)}$. Given the existence of such sequence, one can invoke Theorem \ref{thm: multipl_approx} and obtain a sequence $(X^k)_{k \in \Natural}$ of $\X(x)$-valued processes with $\ucp$-$\lim_{k \to \infty} X^k =  X$. 

To obtain the existence of a sequence of simple $\simplex$-valued predictable processes as described in the above paragraph, observe first that a use of the monotone class theorem provides the existence of a sequence $(\pi^k)_{k \in \Natural}$ of $\simplex$-valued, predictable, simple processes such that $\ucp$-$\lim_{k \to \infty} \pi^k \cdot R =  \pi \cdot R$, $\ucp$-$\lim_{k \to \infty} [(\pi^k -  \pi) \cdot R, \, (\pi^k -  \pi) \cdot R] = 0$, and $\inner{\pi^k}{\Delta R} > -1$ for all $k \in \Natural$.
%(Recall from the proof of Theorem \ref{thm: multipl_approx} that the dot ``$\cdot$'' between two processes denotes stochastic integration and that $[\cdot,\cdot]$ denotes the quadratic variation process.)
Indeed, a simple approximation argument shows that only the special case when $\pi = v \indic_\Sigma$, with $v \in \simplex$ and $\Sigma$ is predictable and vanishes outside $\dbra{0, T}$ for some $T \in \Real_+$, has to be treated. Then, one uses the fact that the predictable $\sigma$-field on $\Omega \times \Real_+$ is generated by the algebra of simple predictable sets of the form $\bigcup_{j = 1}^n H_{j-1} \times (t_{j-1}, t_j]$, where $n \in \Natural$, $0 = t_0 < \ldots < t_n$ and $H_{j-1} \in \F_{t_{j-1}}$ for $j=1, \ldots, n$, and the claim readily follows.

Now, with $Y^k \dfn \pi^k \cdot R$ and $Y = \pi \cdot R$, the facts $\Exp(Y^k) > 0$ for all $k \in \Natural$ as well as $\Exp(Y) > 0$ allow one to write
\[
\log \left( \frac{\Exp(Y^k)}{\Exp(Y)} \right) = Y^k - Y - \frac{1}{2} \left( [Y^k, Y^k]^\co - [Y, Y]^\co\right) - \sum_{t \leq \cdot} \left( \Delta Y^k_t - \Delta Y_t - \log \left( \frac{1 + \Delta Y^k_t}{1 + \Delta Y_t} \right) \right).
\]
Using $\ucp$-$\lim_{k \to \infty} [Y^k - Y, Y^k - Y] =  0$ and $\ucp$-$\lim_{k \to \infty} Y^k =  Y$, which also imply that $\ucp$-$\lim_{k \to \infty} [Y^k, Y^k]^\co =  [Y, Y]^\co$ and $\ucp$-$\lim_{n \to \infty} \Delta Y^n =  \Delta Y$, we get $\ucp$-$\lim_{n \to \infty} \Exp( Y^n ) =  \Exp (Y)$, which is exactly what we wished to establish. \qed

\section{Application to the Expected Utility Maximization Problem} \label{sec: util max}

In this section we show that, for expected-utility-maximizing economic agents, allowing only simple trading with appropriately high trading frequency, results in indirect utilities and wealth processes that can be brought arbitrarily close to their theoretical continuous-trading \emph{optimal} counterparts.

%\subsection{Trading in continuous time}
%
%Theorem \ref{thm: semimarts} brings forth semimartingales in financial modeling, and also the use of stochastic integration with respect to predictable processes, not necessarily of the simple buy-and-hold structure we have been discussing up to now.
%
%Let us introduce some notation to be used below. If $S$ is a semimartingale, $\oX (x, T)$ will denote the class of all admissible (meaning, nonnegative) processes that can be achieved starting from $x$, having financial planning horizon equal to $T$ and trading using any predictable process that vanishes outside $\dbra{0, T}$; obviously $\X(x, T) \subseteq \oX(x, T)$. We define also the corresponding class $\oX$ of all possible admissible wealth processes. Furthermore, $\oX(x, T)$ will be the subset of $\oX(x, T)$ consisting of no-short-sale continuous-trading strategies; $\oX$ is then defined in the obvious way.
%
\subsection{The utility maximization problem}

A \textsl{utility function} is an increasing and concave function $U : (0 , \infty) \mapsto \Real$. We also set $U(0) \dfn \downarrow \lim_{x \downarrow 0} U(x)$ to extend the definition of $U$ to cover zero wealth. Note that \emph{no} regularity conditions are hereby imposed on $U$.

In what follows, we fix a \emph{finite} stopping time $T$ that should be regarded as the financial planning horizon of an economic agent in the market. We then define the agent's indirect utility that can be achieved when continuous-time trading is allowed via
\begin{equation} \label{prob: util-max}
u (x) \, := \, \sup_{X \in \oX (x)} \expec \big[ U(X_T) \big].
\end{equation}
Observe that $\ou$ is a concave function of $x \in \Real_+$ and that $\ou (x) < \infty$ for \emph{some} $x > 0$ if and only if $\ou  (x) < \infty$ for \emph{all} $x \in \Real_+$. In particular, if $\ou (x) < \infty$ for some $x > 0$, $u$ is a proper continuous concave function. If $U$ is \emph{strictly} concave (in which case it is \emph{a fortiori} strictly increasing as well) and a solution to the utility maximization problem defined above exists, then it is necessarily unique.

Similarly, define the agent's indirect utility under simple, no-short-sales trading via
\begin{equation} \label{prob: util-max-simple}
\us (x) \, := \, \sup_{X \in \X (x)} \expec \big[ U(X_T) \big].
\end{equation}
It is obvious that $\us \leq \ou$. All the above remarks concerning $\ou$ carry over to $\us$ \emph{mutatis-mutandis}. Observe however that in almost no case is the supremum in \eqref{prob: util-max-simple} achieved. In other words, it is extremely rare that  an optimal wealth process in the class of simple trading strategies exists for the given utility maximization problem.

\subsection{Near-optimality using simple strategies}

We now show that the value functions $\us$ and $\ou$ are actually equal and that ``near optimal'' wealth processes under simple trading approximate arbitrarily close the solution of the continuous trading case, if the latter exists.

\begin{thm} \label{thm: util_max}
In what follows, condition \naonee \ of Definition \ref{dfn: NUPBR} is assumed. Using the notation introduced above, the following hold:
\begin{enumerate}
  \item $\us (x) = \ou (x)$ for all $x \in \Real_+$.
  \item Suppose that $U$ is \emph{strictly} concave and that $u < \infty$. Then, for any $x \in \Real_+$, any $\X(x)$-valued sequence $(X^k)_{k \in \Natural}$ and any $\oX(x)$-valued sequence $(\ox^k)_{k \in \Natural}$ with $\lim_{k \to \infty} \expec[U(X^k_T)] = \ou(x) = \lim_{k \to \infty} \expec[U(\ox^k_T)]$, we have $\prob$-$\lim_{k \to \infty} |X^k_T - \ox^k_T| = 0$.
  \item Suppose that $U$ is strictly concave and continuously differentiable, and that for some $x \in \Real_+$ there exists $\ox \in \oX(x)$ with $\ox > 0$, $\expec[U(\ox_T)] = \ou(x) < \infty$, and $\expec [U'(\ox_T) X_T] \leq \expec [U'(\ox_T) \ox_T] < \infty$ holding for all $X \in \oX(x)$. Then, for any $\X(x)$-valued sequence $(X^k)_{k \in \Natural}$ with $\lim_{k \to \infty} \expec[U(X^k_T)] = \ou(x)$, we have $\prob$-$\lim_{k \to \infty} \sup_{t \in [0, T]} |X_t^k - \ox_t| = 0$.
\end{enumerate}
\end{thm}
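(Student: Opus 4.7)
The plan for (1) is to combine Theorem \ref{thm: approx_with_bandh} with a strictly positive perturbation that preserves the initial capital. The inequality $\us(x) \leq \ou(x)$ is trivial from $\X(x) \subseteq \oX(x)$. For the reverse, fix $X \in \oX(x)$ with $\expec[U(X_T)] > -\infty$ (the other case being vacuous) and set $X^\delta := \delta + (1 - \delta/x) X \in \oX(x)$ for $\delta \in (0, x)$. Since $X^\delta_T = (1 - \delta/x) X_T + (\delta/x) x$, concavity gives $\expec[U(X^\delta_T)] \geq (1 - \delta/x) \expec[U(X_T)] + (\delta/x) U(x)$, which tends to $\expec[U(X_T)]$ as $\delta \downarrow 0$. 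For each fixed $\delta$, Theorem \ref{thm: approx_with_bandh} together with Remark \ref{rem: approx away from zero} yields $(X^{\delta, k})_{k \in \Natural} \subseteq \X(x)$ with $X^{\delta, k} \geq \delta$ and $\ucp$-$\lim_k X^{\delta, k} = X^\delta$; since $U(X^{\delta, k}_T) \geq U(\delta) > -\infty$, Fatou's lemma along an a.s.-convergent subsequence gives $\expec[U(X^\delta_T)] \leq \liminf_k \expec[U(X^{\delta, k}_T)] \leq \us(x)$. Letting $\delta \downarrow 0$ and taking the supremum over $X \in \oX(x)$ finishes (1).

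For (2), the plan is to exploit strict concavity via a convex combination. Set $Y^k := (X^k + \ox^k)/2 \in \oX(x)$. Concavity of $U$ gives $\Delta^k := 2 U(Y^k_T) - U(X^k_T) - U(\ox^k_T) \geq 0$, while $\expec[U(Y^k_T)] \leq \ou(x)$ combined with $\expec[U(X^k_T)] \to \ou(x)$ and $\expec[U(\ox^k_T)] \to \ou(x)$ forces $\expec[\Delta^k] \to 0$, so $\Delta^k \to 0$ in $L^1$ and hence in probability. By strict concavity, for every $\eta > 0$ and every $0 < a < b < \infty$ there is $\rho > 0$ with $\Delta^k \geq \rho$ on $\{X^k_T, \ox^k_T \in [a, b],\ |X^k_T - \ox^k_T| \geq \eta\}$, so this event has vanishing probability. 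Tightness of $(X^k_T)_{k}$ and $(\ox^k_T)_{k}$ at $\infty$ follows from the supermartingale-deflator characterization of \naonee\ in \cite{KarPla08a}, while tightness near $0$ is automatic when $U(0) = -\infty$ and can be argued separately otherwise. Selecting $a, b$ suitably and letting $\eta \downarrow 0$ then delivers $\prob$-$\lim_k |X^k_T - \ox^k_T| = 0$.

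For (3), I first apply (2) with $\ox^k \equiv \ox$ to obtain $X^k_T \to \ox_T$ in probability, and then lift this to \ucp-convergence via a deflator argument. Define $V_t := \expec[U'(\ox_T) \ox_T | \F_t] / \ox_t$, so that $V_T = U'(\ox_T)$, $V$ is a strictly positive \cadlag\ process on $[0, T]$, and $V \ox$ is a martingale by construction. The first-order condition, applied to the concatenated strategies that agree with $\ox$ on $[0, \tau]$ and with a rescaled competitor $X$ on $(\tau, T]$ (legitimate in the proportional-trading form of \S \ref{subsec: propor trading}), then gives that $V X$ is a supermartingale for every $X \in \oX(x)$. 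Concavity combined with the first-order condition yields $\expec[U(X^k_T)] - \expec[U(\ox_T)] \leq \expec[V_T(X^k_T - \ox_T)] \leq 0$, whence $\expec[V_T X^k_T] \to \expec[V_T \ox_T]$; Scheffé's lemma then upgrades the convergence in probability of $V_T X^k_T$ to convergence in $L^1$. Consequently $V(\ox - X^k)$ is a submartingale with $L^1$-null terminal value and $V(X^k - \ox)$ is a supermartingale starting at $0$ with $L^1$-null terminal value, so Doob's maximal inequality applied to both gives $\ucp$-$\lim_k V(\ox - X^k) = 0$. Since $V$ is strictly positive and \cadlag\ on $[0, T]$ (with paths bounded away from zero $\prob$-a.s.), dividing by $V$ yields $\sup_{t \in [0, T]} |X^k_t - \ox_t| \to 0$ in probability.

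The principal obstacle lies in (3), namely promoting the pointwise first-order condition at $T$ into the supermartingale property of $V X$ at arbitrary stopping times $\tau \leq T$. This demands carefully exploiting the closure of $\oX(x)$ under splicing of proportional-trading strategies and applying the first-order inequality to a suitable family of concatenations; once the deflator structure is in place, the remainder reduces to a routine combination of Scheffé's lemma and Doob's maximal inequality.
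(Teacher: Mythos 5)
Your proposal is correct in substance, and for parts (1) and (2) it follows essentially the paper's own route: (1) perturbs the wealth process to be bounded away from zero, invokes Theorem \ref{thm: approx_with_bandh} with Remark \ref{rem: approx away from zero}, and concludes by Fatou (your version, which rescales to keep initial capital $x$ and uses concavity to control $\expec[U(X^\delta_T)]$, is if anything slightly cleaner than the paper's shift from $x-\epsilon$ to $x$); (2) is the same strict-concavity-gap argument applied to the midpoint $(X^k+\ox^k)/2$, combined with boundedness in probability of the terminal wealths. Part (3) is where you genuinely diverge. Both you and the paper rest on the same deflator --- your $V$ with $V\ox$ a martingale is exactly the paper's change of measure $\ud\qprob/\ud\prob \propto U'(\ox_T)\ox_T$ under which $X/\ox$ is a supermartingale --- but the paper then feeds $Z^k=X^k/\ox$ into a bespoke result (Proposition \ref{prop: supermart conv}), proved via Fatou, total-variation convergence of the measures $\qprob^k$, and first-passage stopping times. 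You instead extract $\expec[V_TX^k_T]\to\expec[V_T\ox_T]$ directly from concavity and the first-order condition, upgrade to $L^1$ by Scheff\'e, and apply the one-sided maximal inequalities to the submartingale $V(\ox-X^k)$ and the supermartingale $V(X^k-\ox)$, both null at $0$. This is a valid and arguably more standard route; what the paper's Proposition \ref{prop: supermart conv} buys is a self-contained statement of independent interest that needs only $Z^k_T\to 1$ in probability as input. Note also that both you and the paper leave the key structural fact --- that $VX$ is a supermartingale for every $X\in\oX(x)$, i.e.\ the fork-convexity/concatenation argument you sketch --- asserted rather than proved, so you are not at a rigor deficit there.

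Two soft spots. In (2), your restriction to the compact box $[a,b]$ with $a>0$ creates an artificial case (``tightness near $0$'') that you do not actually resolve; the sequences $(X^k_T)$ need not be tight away from $0$, and the claim that this is ``automatic when $U(0)=-\infty$'' is unjustified. The fix is to take $a=0$, as the paper does with its sets $K_m\subseteq[0,m]^2$: on $\{|X^k_T-\ox^k_T|\geq\eta\}$ the midpoint is at least $\eta/2>0$, so the concavity gap remains uniformly positive (indeed infinite where $U$ equals $-\infty$), and no separate near-zero argument is needed. In (3), when dividing out $V$ at the end you should justify that $\inf_{t\in[0,T]}V_t>0$ a.s.; this follows because the nonnegative martingale $\expec[U'(\ox_T)\ox_T\,|\,\F_t]$ has strictly positive terminal value and hence never vanishes on $[0,T]$, while $\sup_{t\in[0,T]}\ox_t<\infty$ pathwise.
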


The proof of Theorem \ref{thm: util_max} is given in \S \ref{subsec: proof of thm: util max}; in \S \ref{subsec: result on supermart conv}, an interesting intermediate result is stated and proved. However, we first list a few remarks on the  assumptions and statements of Theorem \ref{thm: util_max}.

\begin{rem}
Under the mild assumption that \emph{for all $i=1, \ldots, d$, we have $S^i_{\zeta^i -} > 0$ on  $\{ \zeta^{S^i} < \infty\}$} (the asset lifetimes $\zeta^{S^i}$ were introduced in the statement of Theorem \ref{thm: semimarts}), condition \naone \ needed in the statement of Theorem \ref{thm: util_max} is actually equivalent to the semimartingale property of $S$. For more information, check Theorem 2.3 in \cite{KarPla08a}.
\end{rem}

\begin{rem}
The utility maximization problem for continuous trading has attracted a lot of attention and has been successfully solved using convex duality methods. In particular, in \cite{MR1722287} and \cite{MR2023886} it is shown that an optimal solution (wealth process) to problem \eqref{prob: util-max} exists for \emph{all} $x \in \Real_+$ and fixed financial planning horizon $T$ under the following conditions: $U$ is strictly concave and continuously differentiable in $(0, \infty)$, satisfies the \textsl{Inada} conditions $\lim_{x \downarrow 0} U'(x) = + \infty$, $\lim_{x \uparrow + \infty} U'(x) = 0$, as well as a \emph{finite dual value function} condition. These conditions can be used to ensure existence of the optimal wealth process in statement (3) of Theorem \ref{thm: util_max}, that additionally satisfies the prescribed properties mentioned there.
\end{rem}

\begin{rem}
In statements (2) and (3), strict concavity of $U$ cannot be dispensed with in order to obtain the result: even in cases where the supremum in \eqref{prob: util-max} is attained, the absence of strict concavity implies that the optimum is not necessarily unique.
\end{rem}

\begin{rem}
Even if we not directly assume condition \naone  \ in statement (3), it is indirectly in force because of the existence of $\ox \in \oX(x)$ with $\ox > 0$ and $\expec[U(\ox)] = \ou(x) < \infty$. Indeed, suppose that \naone \ fails and pick $T \in \Real_+$ and $(X^n)_{n \in \Natural}$ such that $X^n \in \X (1/n)$ and $\prob[X^n_T \geq \xi] =1$ for all $n \in \Natural$, where $\prob[\xi \geq 0] = 1$ and $\prob[\xi > 0] > 0$. In that case, the convexity of $\oX(x + 1/n)$  gives that $(\ox + X^n) \in \oX(x + 1/n)$ for all $n \in \Natural$. Therefore, 
\[
\ou (x + 1/n) \geq \expec \bra{U(\ox + X^n)} \geq \expec \bra{U(\ox +  \xi)} > \ou (x)
\]
holds for all $n \in \Natural$, which implies that $\ou (x) < \lim_{n \to \infty} \ou(x + 1/n)$ and contradicts the continuity of the finitely-valued function $\ou$. For a similar result in this direction, see Proposition 4.19 in \cite{MR2335830}.
\end{rem}

\begin{rem}
The difference between statements (2) and (3) in Theorem \ref{thm: util_max} is that in the latter case we can infer \emph{uniform} convergence of the wealth processes to the limiting one, while in the former we only have convergence of the terminal wealths. It is an open question whether the uniform convergence of the wealth processes can be established without assuming that the utility maximization problem involving continuous trading has a solution.
\end{rem}

%\begin{rems}
%
%\begin{itemize}
%\item 
%\item 
%%\item For the validity of statement (1), the NUPBR condition is not necessary, as the proof of Theorem \ref{thm: util_max} reveals.
%\item 
%%Further, if condition \naone \  fails, one can by definition find $T \in \Real_+$ ans an  $\X(x)$-valued sequence $(X^k)_{k \in \Natural}$ such that $\prob$-$\lim_{k \to \infty} X^k_T = \infty$ for some $T \in \Real_+$ holds on an event $A \in \F_T$ with $\prob[A] > 0$.
%\item 
%%\item The assumption that $U$ is increasing can be dropped from statements (1) and (2) of Theorem \ref{thm: util_max}, if one makes instead the mild assumption that $S$ is locally bounded. We do not go into details on this issue, since this is more a purely mathematical, and less an economical, question.
%\end{itemize}
%\end{rems}

\subsection{An result on supermatingale convergence} \label{subsec: result on supermart conv}

The following result, stated separately due to its independent interest, will help proving statement (3) of Theorem \ref{thm: util_max}.

\begin{prop} \label{prop: supermart conv}
On the filtered probability space $(\Omega, \, (\F_t)_{t \in \Real_+}, \, \qprob)$, let $(Z^k)_{k \in \Natural}$ be a sequence of nonnegative $\qprob$-supermartingales on $\dbra{0, T}$ with $Z^k_0 = 1$ for all $k \in \Natural$ and $\qprob$-$\lim_{k \to \infty} Z^k_T = 1$, where $T$ is a finite stopping time. Then, $\qprob$-$\lim_{k \to \infty} \sup_{t \in [0, T]} |Z_t^k - 1| = 0$.
\end{prop}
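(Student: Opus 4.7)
The strategy is a Doob-Meyer-style splitting of each supermartingale into a martingale part and a nonnegative ``drift'' part, followed by applying Doob's maximal inequality to each piece separately. Concretely, define
\[
N^k_t \dfn \expec_\qprob \bra{Z^k_T \mid \F_t}, \qquad A^k_t \dfn Z^k_t - N^k_t, \qquad t \in \dbra{0,T}.
\]
Then $N^k$ is a martingale on $\dbra{0,T}$ closed by $Z^k_T$, and the supermartingale property of $Z^k$ together with optional sampling (valid for nonnegative supermartingales at the finite stopping time $T$) yields $A^k \geq 0$; note also $A^k_T = 0$. The whole proof reduces to showing that both $\sup_{t \in [0,T]} A^k_t$ and $\sup_{t \in [0,T]} |N^k_t - 1|$ tend to zero in $\qprob$-probability.

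First I would handle the expectations. From $\qprob$-$\lim_k Z^k_T = 1$, nonnegativity of $Z^k_T$, and Fatou's lemma (applied along a $\qprob$-a.s.\ convergent subsequence and then transferred back to the full sequence by the standard subsequence principle), one obtains $\liminf_k \expec_\qprob[Z^k_T] \geq 1$. The supermartingale property gives $\expec_\qprob[Z^k_T] \leq Z^k_0 = 1$, so $\expec_\qprob[Z^k_T] \to 1$. By Scheff\'e's lemma, $Z^k_T \to 1$ in $L^1(\qprob)$.

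Next I would apply Doob's maximal inequality to each piece. For the nonnegative supermartingale $A^k$ with $A^k_0 = 1 - \expec_\qprob[Z^k_T]$, Doob's inequality gives
\[
\qprob \bra{\sup_{t \in [0,T]} A^k_t > \lambda} \leq \frac{A^k_0}{\lambda} = \frac{1 - \expec_\qprob[Z^k_T]}{\lambda} \To 0
\]
for every $\lambda > 0$. For the martingale part, the conditional Jensen inequality yields $|N^k_t - 1| = |\expec_\qprob[Z^k_T - 1 \mid \F_t]| \leq \expec_\qprob[|Z^k_T - 1| \mid \F_t]$, and the process on the right is a nonnegative submartingale closed at $T$ by $|Z^k_T - 1|$. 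Doob's submartingale maximal inequality then gives
\[
\qprob \bra{\sup_{t \in [0,T]} |N^k_t - 1| > \lambda} \leq \frac{\expec_\qprob[|Z^k_T - 1|]}{\lambda} \To 0,
\]
by the $L^1(\qprob)$-convergence established in the previous step. Combining via $|Z^k_t - 1| \leq A^k_t + |N^k_t - 1|$ yields the claim.

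The only subtlety I anticipate is the upgrade from convergence in probability to convergence in $L^1$, i.e.\ making sure $\expec_\qprob[Z^k_T] \to 1$; all other steps are textbook applications of Doob's inequalities. Everything rests on the supermartingale property at the (possibly unbounded but a.s.\ finite) stopping time $T$, which is the reason nonnegativity of $Z^k$ is essential and used both for optional sampling and for Scheff\'e.
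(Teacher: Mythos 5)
Your proof is correct, but it takes a genuinely different route from the paper. You use the Riesz-type decomposition $Z^k = N^k + A^k$, where $N^k_t = \expec^{\qprob}[Z^k_T \mid \F_t]$ is the martingale closed by $Z^k_T$ and $A^k$ is the nonnegative supermartingale (potential) vanishing at $T$; after upgrading $\qprob$-$\lim_k Z^k_T = 1$ to $L^1(\qprob)$-convergence via Fatou and Scheff\'e, each piece is killed by a single textbook maximal inequality --- the supermartingale inequality $\qprob[\sup_{t \leq T} A^k_t > \lambda] \leq \expec^\qprob[A^k_0]/\lambda = (1-\expec^\qprob[Z^k_T])/\lambda$ for the potential, and Doob's submartingale inequality applied to $\expec^\qprob[|Z^k_T - 1| \mid \F_\cdot]$ for the martingale part. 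The paper instead performs the same Fatou/uniform-integrability step and then argues directly on $Z^k$ via first-passage times: for the running supremum it stops at $\tau^k = \inf\{t : Z^k_t > 1+\epsilon\}\wedge T$ and derives a contradiction from $\expec^\qprob[Z^k_T] \leq \expec^\qprob[Z^k_{\tau^k}] \leq 1$ together with total-variation convergence of the measures $Z^k_T \ud\qprob / \expec^\qprob[Z^k_T]$; for the running infimum it uses a conditional-probability estimate at the first passage below $1-\epsilon$. Your argument is more modular and avoids the proof by contradiction, at the cost of invoking optional sampling at the (a.s.\ finite, possibly unbounded) stopping time $T$ to get $A^k \geq 0$ and of needing a right-continuous modification of $N^k$ (available under the usual conditions, which the paper assumes); the paper's argument stays entirely with the process $Z^k$ itself and only ever evaluates it at hitting times. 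Both hinge on the same key quantitative input, namely $\expec^\qprob[Z^k_T] \to 1$.
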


\begin{proof}

Since $Z^k \geq 0$ for all $k \in \Natural$, it suffices to show that $\qprob$-$\lim_{k \to \infty} \sup_{t \in [0, T]} Z_t^k = 1$ and $\qprob$-$\lim_{k \to \infty} \inf_{t \in [0, T]} Z^k_t = 1$.

For proving $\qprob$-$\lim_{k \to \infty} \sup_{t \in [0, T]} Z_t^k = 1$, observe that $\lim_{k \to \infty} \expec^\qprob[Z^k_T] = 1$ as a consequence of Fatou's lemma; this implies the $\qprob$-uniform integrability of $(Z^k_T)_{k \in \Natural}$ and as a consequence we obtain $\lim_{k \to \infty} \expec^\qprob[|Z^k_T - 1 |] = 0$. In particular, the probabilities $(\qprob^k)_{k \in \Natural}$ defined on $(\Omega, \F_T)$ via $(\ud \qprob^k / \ud \qprob)|_{\F_T} = Z^k_T / \expec^\qprob [Z^k_T]$ converge in total-variation norm to $\qprob$.

Fix $\epsilon > 0$ and let  $\tau^k :=  \inf \{t \in \Real_+ \such Z^k_t > 1 + \epsilon \} \wedge T$. We have $\expec^\qprob[Z^k_T] \leq \expec^\qprob[Z^k_{\tau^k}] \leq 1$, which means that $\lim_{k \to \infty} \expec^\qprob[Z^k_{\tau^k}] = 1$. Showing that $\lim_{k \to \infty} \qprob[\tau^k < T] = 0$ will imply that $\qprob$-$\lim_{k \to \infty} \sup_{t \in [0, T]} Z^k_t = 1$, since $\epsilon > 0$ is arbitrary. Suppose on the contrary (passing to a subsequence if necessary) that $\lim_{k \to \infty} \qprob[\tau^k < T] = p > 0$.  Then,
\begin{eqnarray*}
% \nonumber to remove numbering (before each equation)
  1 = \lim_{k \to \infty} \expec^\qprob[Z^k_{\tau^k}] &\geq& (1 + \epsilon) p + \liminf_{k \to \infty} \expec^\qprob [Z^k_{T} \indic_{\{ \tau^k = T \}}] \\
    &=& (1 + \epsilon) p + \liminf_{k \to \infty} \left( \expec^\qprob[Z^k_T] \qprob^k [ \tau^k = T ] \right) \ = \ 1 + \epsilon p,
\end{eqnarray*}
where the last equality follows from $\lim_{k \to \infty} \expec^\qprob[Z^k_T] = 1$ and $\lim_{k \to \infty} \qprob^k [ \tau^k = T ] = \lim_{k \to \infty} \qprob [ \tau^k = T ] = 1 - p$. This contradicts $p > 0$ and the first claim is proved.

Again, with fixed $\epsilon > 0$, redefine $\tau^k :=  \inf \{t \in [0, T] \such Z^k_t < 1 - \epsilon \} \wedge T$ --- we only need to show that $\lim_{k \to \infty} \qprob[\tau^k < T] = 0$. Since $\qprob[Z^k_T > 1 - \epsilon^2 \such \F_{\tau^k}] \leq (1 - \epsilon) / (1 - \epsilon^2) = 1 / (1 + \epsilon)$ holds on the event $\{ \tau^k < T \}$, we have
\[
\qprob[Z^k_T > 1 - \epsilon^2] = \expec^\qprob \big[ \qprob[Z^k_T > 1 - \epsilon^2 \such \F_{\tau^k}] \big] \ \leq \ \qprob[\tau^k = T] + \qprob[\tau^k < T] \frac{1}{1+\epsilon}.
\]
Use $\qprob[\tau^k = T] = 1 - \qprob[\tau^k < T]$, rearrange and take the limit as $n$ goes to infinity to obtain
\[
\limsup_{k \to \infty} \qprob[\tau^k < T] \ \leq \ \frac{1 + \epsilon}{\epsilon} \limsup_{k \to \infty} \qprob[Z_T^k \leq 1 - \epsilon^2] \ = \ 0,
\]
which completes the proof of Proposition \ref{prop: supermart conv}.
\end{proof}

\subsection{Proof of Theorem \ref{thm: util_max}} \label{subsec: proof of thm: util max}
%Before the proof of Theorem \ref{thm: util_max} is given, we prepare the ground with the following result.

We close by giving the proof of each of the three statements of Theorem \ref{thm: util_max}.

\subsubsection{Proof of statement (1)} \label{subsubsec: proof ot util max 1}
We begin by proving that $\us = \ou$. Assume first that $\ou$ is finite. Since $\lim_{\epsilon \downarrow 0} \ou(x - \epsilon) = \ou(x)$ for all $x > 0$, it suffices to prove that for all $\epsilon \in (0, x)$ there exists an $\X(x)$-valued sequence $(X^k)_{k \in \Natural}$ such that $\ou(x - \epsilon) \leq \liminf_{k \to \infty} \expec[U(X^k_T)] + \epsilon$. Pick $\xi \in \oX(x - \epsilon)$ such that $\expec[U(\xi_T)] \geq \ou(x - \epsilon) - \epsilon$; then, $X := \epsilon + \xi$ satisfies $\expec[U(X_T)] \geq \ou(x - \epsilon) - \epsilon$, $X \in \oX(x)$ and $X \geq \epsilon$. According to Theorem \ref{thm: approx_with_bandh} (combined with Remark \ref{rem: approx away from zero}), we can find an $\X(x)$-valued sequence $(X^k)_{k \in \Natural}$ with $\prob$-$\lim_{k \to \infty} X^k_T = X_T$ and $X_T^k \geq \epsilon$. Fatou's lemma implies that $\expec[U(X_T)] \leq \liminf_{k \to \infty} \expec[U(X^k_T)]$ and the proof that $\us = \ou$ for the case of finitely-valued $\ou$ is clarified. The case where $\ou \equiv \infty$ is treated similarly.

\subsubsection{Proof of statement (2)} \label{subsubsec: proof ot util max 2}

Pick any $\X(x)$-valued sequence $(X^k)_{k \in \Natural}$ and any $\oX(x)$-valued sequence $(\ox^k)_{k \in \Natural}$ with $\lim_{k \to \infty} \expec[U(X^k_T)] = \ou(x) = \lim_{k \to \infty} \expec[U(\ox^k_T)]$. We shall show below that $\prob$-$\lim_{k \to \infty} |X^k_T - \ox^k_T| = 0$.

For any $m \in \Natural$, define $K_m := \{ (a, b) \in \Real^2 \ | \  a \in [0, m], b \in [0, m] \textrm{ and } |a - b| > 1 / m \}$. As follows from Proposition 2.1 of \cite{KarPla08a}, under condition \naone \ both sequences $(X^k_T)_{k \in \Natural}$ and $(\ox^k_T)_{k \in \Natural}$ are bounded in probability. Therefore, $\prob$-$\lim_{k \to \infty} |X_T^k - \ox^k_T| = 0$ will follow if we establish that, for all $m \in \Natural$, $\lim_{k \to \infty} \prob \big[ \big( X_T^k, \, \ox^k_T \big) \in K_m \big] = 0$.

Fix some $m \in \Natural$; the strict concavity of $U$ implies the existence of some $\beta_m > 0$ such that for all $(a, b) \in (0, \infty) \times (0, \infty)$ we have
\[
\frac{U(a) + U(b)}{2} + \beta_m
\indic_{K_m} (a, b) \leq U \Big( \frac{a + b}{2} \Big).
\]
Setting $a = X^k_T$, $b = \ox^k_T$ in the previous inequality and taking expectations, one gets
\begin{eqnarray*}
% \nonumber to remove numbering (before each equation)
  \beta_m \prob \big[ \big( X_T^k, \, \ox^k_T \big)
\in K_m \big] &\leq& \expec \left[ U \left(\frac{X_T^k + \ox^k_T}{2} \right) \right] - \frac{\expec[U(X^k_T)] + \expec[U(\ox^k_T)]}{2} \\ &\leq& \ou(x) - \frac{\expec[U(X^k_T)] + \expec[U(\ox^k_T)]}{2};
\end{eqnarray*}
since $\lim_{k \to \infty} \big( \expec[U(X^k_T)] + \expec[U(\ox^k_T)] \big) = 2 \ou(x)$, $\lim_{k \to \infty} \prob \big[ \big( X_T^k, \, \ox^k_T \big) \in K_m \big] = 0$ follows.

\subsubsection{Proof of statement (3)} \label{subsubsec: proof ot util max 3}

Pick any $\X(x)$-valued sequence $(X^k)_{k \in \Natural}$ with the property that $\lim_{k \to \infty} \expec[U(X^k_T)] = \ou(x)$. We already know from part (2) of Theorem \ref{thm: util_max} that $\prob$-$\lim_{k \to \infty} X_T^k = \ox_T$. What remains in order to prove statement (3) is to pass to the stronger convergence $\ucp$-$\lim_{k \to \infty} X^k = \ox$. Observe that since $\inf_{t \in [0, T]} \ox_t > 0$, which is a consequence of $\ox > 0$ and condition \naone, the latter convergence is equivalent to $\ucp$-$\lim_{k \to \infty} (X^k / \ox) = 1$.

Define a new probability $\qprob$ on $\F_\infty$ via the recipe
\[
\frac{\ud \qprob}{\ud \prob} = \frac{\ox_T U'(\ox_T)}{\expec [\ox_T U'(\ox_T)]}.
\]
The assumptions of statement (3) in Theorem \ref{thm: util_max} imply that $\qprob$ is well-defined and equivalent to $\prob$ on $\F_\infty$, as well as that $X / \ox$ is a $\qprob$-supermartingale on $\dbra{0, T}$ for all $X \in \oX(x)$.
%$\ox$ is a \emph{maximal} element in $\set{X_T \such X \in \oX(x)}$, meaning that for any other $\xi \in \oX(x)$ with $\prob[\xi_T \geq \ox_T]=1$ we actually have $\prob[\xi_T = \ox_T]=1$. By the results of \cite{MR1381678}, we infer the existence of a probability $\qprob \sim \prob$ such that $X / \ox$ is a $\qprob$-supermartingale on $\dbra{0, T}$ for all $X \in \oX(x)$.
Letting $Z^k := X^k / \ox$ for all $k \in \Natural$, we are in the following situation: $Z^k$ is a nonnegative $\qprob$-supermartingale on $\dbra{0, T}$ with $Z^k_0 = 1$ for all $k \in \Natural$, and $\qprob$-$\lim_{k \to \infty} Z^k_T = 1$. Then, Proposition \ref{prop: supermart conv} allows us to conclude.

% ----------------------------------------------------------------
\bibliographystyle{siam}
\bibliography{buy_and_hold}
\end{document}